% This is samplepaper.tex, a sample chapter demonstrating the
% LLNCS macro package for Springer Computer Science proceedings;
% Version 2.20 of 2017/10/04
%
\documentclass[runningheads]{llncs}
\usepackage[latin1]{inputenc}
\usepackage{algorithm}
\usepackage[noend]{algpseudocode}
\usepackage{float}
\usepackage{graphicx}
\usepackage{mathtools}
\usepackage[title]{appendix}
\usepackage{soul, color}
\usepackage{comment}
\usepackage{thmtools,thm-restate}

\usepackage[colorinlistoftodos]{todonotes}

%\graphicspath{ {figures/} }
%
% Used for displaying a sample figure. If possible, figure files should
% be included in EPS format.
%
% If you use the hyperref package, please uncomment the following line
% to display URLs in blue roman font according to Springer's eBook style:
% \renewcommand\UrlFont{\color{blue}\rmfamily}

\begin{document}
\title{Towards Secure and Efficient Payment Channels}
%
%\titlerunning{Abbreviated paper title}
% If the paper title is too long for the running head, you can set
% an abbreviated paper title here

%The authors are always in alphabetic order!
\author{Georgia Avarikioti \and Felix Laufenberg \and Jakub Sliwinski \and Yuyi Wang \and Roger Wattenhofer}
\authorrunning{Avarikioti et al.}
% First names are abbreviated in the running head.
% If there are more than two authors, 'et al.' is used.
%
\institute{ETH Zurich\\
\email{\{zetavar,felixla,sljakub,yuwang,wattenhofer\}@ethz.ch}}
\maketitle              % typeset the header of the contribution
\begin{abstract}
Micropayment channels are the most prominent solution to the limitation on transaction throughput in current blockchain systems. However, in practice channels are risky because participants have to be online constantly to avoid fraud, and inefficient because participants have to open multiple channels and lock funds in them. To address the security issue, we propose a novel mechanism that involves watchtowers incentivized to watch the channels and reveal a fraud. 
%Our protocol does not require participants to run a full node on the blockchain and decreases the probability of losing funds, making payment channels accessible for everyone. 
Our protocol does not require participants to be online constantly watching the blockchain. The protocol is secure, incentive compatible and lightweight in communication. Furthermore, we present an adaptation of our protocol implementable on the Lightning protocol. Towards efficiency, we examine specific topological structures in the blockchain transaction graph and generalize the construction of channels to enable topologies better suited to specific real-world needs. In these cases, our construction reduces the required amount of signatures for a transaction and the total amount of locked funds in the system.   

%Micropayment channels are the most prominent solution for solving the throughput problem for blockchain payment transactions. Open challenges in creating secure and usable channels remain, specifically both parties of a channel need to be full nodes on the blockchain to secure their channel. Besides the demanding hardware requirements of running a full node, downtime could lead to a loss of funds, leading to costly participation and a vulnerability against DOS attacks. We propose a novel mechanism to secure channels involving witnesses. Our protocol removes the requirement of participants to run a full node and decreases the probability of losing funds, making payment channels accessible for everyone. Our approach is secure, incentive compatible and lightweight in communication. 

\keywords{Payment Channels  \and Bitcoin \and Lightning \and Watchtowers \and Layer 2 \and Channel Factories}
\end{abstract}
\section{Introduction}
Increasing the transaction throughput of blockchain protocols without sacrificing security is arguably the biggest challenge for cryptocurrencies. % to effectively enter the retail market. 
%Sharding protocols, such as  \cite{kokoris2017omniledger,luu2016sharding}, attempt to increase the transaction throughput by processing multiple blocks in parallel. 
%Sharding solves the scalability problem in general, making it a perfect solution for general purpose blockchains such as Ethereum \cite{ethereum}. However, these protocols are very complex and provide a large surface for possible vulnerabilities. Additionally there are still many open problems with making such a system work securely and efficiently. 
So-called ``layer two'' protocols provide an elegant solution especially for payment transactions. Early versions \cite{spillman-contract} supported only unidirectional payments, but allowed these transactions to be confirmed instantly and off-chain, reducing the load of on-chain transactions.
Duplex channels \cite{DW2015channels} allow for bidirectional transactions between two parties. More recent work \cite{deckereltoo} has improved the efficiency and usability of channels.
Channels can be used to build a channel network using Hashed Timelock Contracts (HTLCs), as discussed in \cite{DW2015channels,poon2015lightning,malavolta2017concurrency,channelnetworks}. A payment channel network supports off-chain payments between users that do not share a direct channel. 
The Lightning network \cite{poon2015lightning} on Bitcoin \cite{nakamoto2008bitcoin} and the Raiden network \cite{raiden2017} on Ethereum \cite{ethereum} are implementations of such channel networks. Perun \cite{dziembowski2017perun} supposes a new network structure, building around payment hubs in order to make channel networks more efficient. The authors of \cite{dziembowski2017perun} also introduce a new model of channels that decrease the involvement of intermediaries for payments between parties that don't share a direct channel.

The security of channels relies on participants being responsive and running a full node. The main problem is that one party might try to settle on an old state of the channel, in which case the other party needs to publish a newer state during a specific time period. Thus, fraud can be proven and punished, as long as both parties are constantly online watching the blockchain. However, this is a major drawback for payment channels; being constantly online is not efficient and furthermore a risky assumption. One party could always launch a DDOS attack against the other party to prevent a new update transaction from being published in the blockchain. To address this problem, the  Bitcoin community has proposed using \emph{watchtowers} \cite{watchtowers} as third parties to watch channels, but to the best of our knowledge, incentivizing watchtowers remains an open problem. This is the first problem we examine in this work.
%Whoever can prove it must have access to a running full node to dectect that somebody tried to settle on an old state. Running a full Bitcoin node is costly, i.e. it requires hardware that most end-users don't have at home and some skill to set up a node without significant downtime. The cost of renting hardware to run a node is unreasonably high just for making payments, and for the Lightning network \cite{poon2015lightning} the fees collected for routing transactions seem not to cover the expenses of renting a node, as explored in \cite{shitcoin}, while this also depends on the liquidity and number of channels that a node has. Moreover, even if a party is running a full node, he might lose funds if his node crashes or goes offline for an extended period of time. 
%The Bitcoin community has proposed using watchtowers \cite{watchtowers} as third parties to watch channels, but to the best of our knowledge, incentivizing watchtowers remains an open problem. This is the first problem we address in this work.

In a simple watchtower solution, miners could act as watchtowers to keep track of all channels off-chain and publish a proof of fraud to collect fees. Such an approach faces various problems. Firstly, throughput; having all miners receive, store and send all off-chain payments leads to network congestion which in turn leads to lower throughput and recentralization as more powerful hardware is required to become a miner. Secondly, it is difficult to predict the miners' behavior in such a system since miners are not incentivized to propagate the proofs of fraud. On the contrary, it might be in the miner's best interest to keep the proofs to himself to increase his probability to collect the fees if he publishes a block.
Our approach builds a network of watchtowers around each participant of the payment channel network. Watchtowers increase their expected payoff by faithfully executing the protocol, watching the blockchain and forwarding channel updates to other watchtowers. We prove that our protocol is secure and incentive-compatible. Furthermore, we present an adaptation of our protocol implementable on the Lightning network.

Although the construction of two-party payment channels is a simple and elegant solution, it fails to capture more complex topological structures in the blockchain transaction graph. Typical payment channels are funded by two people who lock their funds in a transaction in the blockchain to open the channel. For each update transaction both parties must sign to make it valid, thus two signatures are required. In an attempt to generalize the construction of channels and allow funds to be transferred freely between multiple parties, Burchert et al.~\cite{Burchert2017scalable} proposed the construction of channel factories. A channel factory is a multi-party channel. Channel factories reduce the number of open channels on the blockchain and also the total amount of locked funds since they allow money to be transfered off-chain between the participants of the channel factory. However, every valid transaction requires the signature of every party of the channel factory. This is highly non-practical. 

In this work, we examine specific topologies where multi-party channels do not require multiple signatures. Inspired by real life hierarchical situations, such as  customers paying a supermarket or citizens paying their taxes, we introduce a new type of channel construction that can capture such topologies. Our solution increases the efficiency of channels, when every node of the transaction graph has only one outgoing edge. Only two signatures are required for each transaction while the amount of locked funds is minimized. 

Our contribution is summarized as follows: In Section \ref{sec:protocol}, we present a novel protocol to secure channels by incentivizing  watchtowers. We prove our protocol is secure and incentive compatible in Section \ref{sec:analysis}, and present an alternative version implementable in Bitcoin's Lightning network in Section \ref{sec:bitcoin}. Additionally, in Section \ref{sec:newchannels} we  generalize the construction of channels to enable them to capture different topological structures of the blockchain transactions graph, thus improving efficiency. We discuss related work in Section \ref{sec:relatedwork} and conclude with future work in Section \ref{sec:conclusion}. The omitted proofs can be found in the appendices.

%We discuss the protocol on a high level in Section \ref{sec:protocol}, making it implementable in a new blockchain or in Turing complete chains such as Ethereum\cite{ethereum}, and show that it is incentive compatible and secure in Section \ref{sec:analysis}. An implementation for non Turing complete blockchains depends on the limitations of their respective scripting languages. A proposal for an adaption of the protocol to make it implementable for Lightning \cite{poon2015lightning} is discussed in Section \ref{sec:bitcoin}.

\section{The DCWC (Disclose Cascade Watch Commit) Protocol} 
\label{sec:protocol}

%\subsection{Introduction to Channels}
There are many different models and implementations of channels such as \cite{deckereltoo,DW2015channels}. While the implementations (e.g., Raiden, Lightning) differ, the concept remains mostly the same. 
A channel protocol typically consist of three phases:  setup, negotiation and settlement.

More specifically, two parties create on-chain funding transactions to start a channel, blocking parts of their stake on the blockchain. After this setup phase, both parties sign subsequent off-chain update transactions representing the current distribution of stake between them. Money can flow in either direction, as long as the paying party still has funds in the channel. This is the negotiation phase. Either party can create a settlement transaction to close the channel. The problem is if party  $A$ fraudulently issues an old settlement transaction, and party $B$ can prove it is old by presenting a transaction signed by $A$ with a higher sequence number. We have a proof-of-fraud, and party $B$ is awarded all stake in the channel.

Whenever an update transaction is issued, the party $B$ who receives a payment wants to be sure that some part of the network knows about this update. 
		A simple solution is to spread the newest update transaction to as many nodes as possible and offer a reward for whomever sends a proof of fraud to the miners in case of misuse. Such a protocol could accidentally or purposefully cause congestion at block miners, providing the possibility to launch cheap DDOS attacks. Nobody knows how many nodes have seen the update and are willing to send it to the miners.  Also nodes have no real incentive to spread the contract as it would only increase the competition, and decrease their chances of getting paid.
		  
		Our protocol requires participating watchtowers to store only $O(1)$ messages as the message with the highest sequence number is a proof for all others. In terms of \textit{privacy} watchtowers know which channel they are watching but they don't need to learn anything about the payments except for the sequence number. It consist of $l$ rounds. During each round a group of watchtowers gets the chance to send a proof-of-fraud and collect fees. Once a proof-of-fraud has been included on the blockchain or $l$ rounds have passed, the protocol terminates. Each round consists of some predetermined number of blocks $m$.  If no proof of fraud was published after $l$ rounds, the settlement transaction will be accepted as the final state of the channel and the funds will be unlocked. The protocol terminates after $l \cdot b < t$, where $t$ is the timelock off settlement transactions in current solutions \cite{poon2015lightning,DW2015channels}. We improve the time that funds are locked after a channel is closed one sided, compared to state-of-the-art solutions. Intuitively, our protocol can terminate much earlier since many more entities are online and watching the channel. Note that the channel can be closed instantly if both involved parties are cooperative and online. 
	
	When a new channel is opened, both parties provide a fund on-chain for paying transaction fees for a proof of fraud and paying the watchtower that published it. The protocol consist of two phases.  
	 The first phase is executed once for every newly issued update transaction. It is responsible for spreading the update in the network. 
	 The second phase ensures that a proof of fraud will be published, when an old update transaction is published as a settlement transaction. The second phase consists of $l$ rounds, where $l$ is specified on the funding transaction and should depend on the amount of funds locked in the channel.

	\subsection{Phase 1: Disclose \& Cascade}
    The first phase is entered after the funding transaction has been included in the blockchain and terminates when a settlement transaction is included in a block. This phase ensures that a sufficient amount of watchtowers receive an update message, and also that these messages are correct. 
    Essentially, participants send new update messages to their neighbours for a few iterations. These messages are cryptographically designed to capture the travelled path in order to reward watchtowers that forwarded messages.
	\textsc{Disclose} of Phase 1 is invoked by a party $A/B$, involved in the channel, anytime he or she receives a new update transaction $t_{c,i}$ with sequence number $i$. \textsc{Cascade} of Phase 1 is invoked by any node receiving a message generated from \textsc{Disclose} or \textsc{Cascade}. The parameter $N$ limits the number of neighbors to whom a message can be sent. The value $d_{m}$ determines the number of hops that a message $m$ has traveled. Let $\{m\}_K$ denote that $m$ is encrypted with private key $K$.
 
	\begin{algorithm}[t]
    	\label{alg:A}
		\caption{Phase 1: \textsc{Disclose \& Cascade}}
		\begin{algorithmic}[1]
			\Procedure{Disclose}{}
            \label{proc:init}
			\State $\textit{K} \gets {own~private~key}$

			\For{($i \gets 1$;~$i \le N$;~$i\gets i+1$)} 
				\State $W \gets find~new~neighbour$ 
				\State $m \gets \{i,W,t_{c,i}\}_K$
				\State send($m$) to $W$
			\EndFor

			\EndProcedure
			
			\Procedure{Cascade}{}
           \label{proc:spread}
			\State $\textit{in} \gets {receive~message}$
			\State $\textit{t} \gets {d_{in} + 1}$
			\If{$l \le t$} \Return
			\EndIf
			\State $\textit{K} \gets {own~private~key}$

			\For{($i \gets 1$;~$i \le N$;~$i\gets i+1$)} 
			\State $W \gets find~new~neighbour$ 
			\State $m \gets \{i,W,in\}_K$
			\State send($m$) to $W$
			\EndFor
			
			\EndProcedure
		\end{algorithmic}
	\end{algorithm}
	
Thus,

\begin{equation}
    \begin{array}{lll}
	t_{c,i} & \coloneqq &  \text{i-th update transaction of channel c} \\ \\
	
	m & \coloneqq & \{id,W,m'\}_K~or~ \{id,W,t_{c,i}\}_K
	\end{array} 
\end{equation}

%	\begin{figure}[h]
%		\centering
%		\includegraphics[width=0.7\textwidth]{}
%		\caption{Participant B in the channel disclosing the newest update message to his neighbours. Watchtowers W cascading the message to his neighbours.} 
%		\label{fig:phase1}
%	\end{figure}

\subsection{Phase 2: Watch \& Commit}
	The second phase of the protocol ensures that in case of fraud, i.e. $A$ or $B$ publish an old update transaction, a proof of fraud will be published. Throughout the lifespan of the channel all watchtowers that are involved in the protocol watch the blockchain for such an update message (\textsc{watch} process).
	After a settlement transaction $t_{c,i}$ of a channel, signed by one of the parties involved in the channel, is published in a block the watchtowers start the \textsc{commit} process. It consists of a fixed number of rounds $l$, specified in the channel. For simplicity, we assume a round corresponds to a single block. However, the number of blocks that determine a round can be adapted without affecting the protocol. We note that block frequency and propagation time should be considered to estimate the reaction time of the watchtowers, and thus the number of blocks per round.
    The actors in this phase are the watchtowers. 
    
    Let $t_{c,j}$ denote the newest update transaction that watchtower $W$ has seen. Let $m$ denote the message as created in the \textsc{Disclose \& Cascade} phase through which $W$ learned about the newest update transaction $t_{c,j}$. Watchtower $W$ stores only this message and can discard any other message concerning channel $c$ thus reducing his storage costs. After $d_{m}$ blocks trail the settlement transaction $t_{c,i}$ and $j > i$ and no proof of fraud has been published yet, $W$ signs and sends $m$ to the blockchain network. With $m$, $W$ also sends all information that $m$ contains in plaintext. This information includes the path that the message has travelled, and the $id$ at each hop of the path. The plaintext information will not be included on the blockchain, but helps the network to reduce the number of invalid messages that are send around. If the plaintext of the message does not match the encrypted information, then the network nodes will discard the message.
    
%	 \begin{figure}[h]
%%	 	\includegraphics[width=0.7\textwidth]{}
%	 	\caption{Vizualization of \textsc{commit} process. Party $A$ (yellow) publishes an old update as settlement transaction and prevents $B$ (blue) from publishing a proof of fraud. The groups watchtowers, depicted as eyes, get the chance to publish a proof-of-fraud during the following rounds.}
%	 	\label{fig:phase2}
%	 \end{figure}

	\subsection{Payoffs}
  After $l$ rounds have passed, the channel is closed. At this stage the funds of $A$ and $B$ are unlocked and the watchtowers are paid.
  
  %\subsubsection{Payoff for Watchtowers.} 
  Ideally, all watchtowers would get pay their marginal contribution for participating in the protocol.  However, a protocol that implements such a payment mechanism would require many on-chain payments to pay all involved watchtowers. Thus, we propose a payment mechanism that pays only the watchtowers that participated in forwarding and publishing the update message that has been included on-chain. 
  When the channel is created, both parties involved in the channel reserve some value $\rho_c$ for paying transaction fees and rewards for a proof-of-fraud. When a proof-of-fraud has been included in the blockchain, all watchtowers included in forwarding that proof-of-fraud get an equal share of $\rho_c$, reserved by the cheating party, after the transaction fee has been deducted.
  In the following section, we show that this payment mechanism ensures sufficient expected payoff to incentivize the watchtowers.
  
\subsection{Transaction validation}

The miner of the next block checks whether the messages were generated according to the protocol. Out of the set of valid messages he or she selects one uniformly at random and publishes it in the new block.
The miners perform the following checks to determine the correctness of a message $m$ containing the update transaction $t_{c,j}$:
	\begin{itemize}
    	\item No proof of fraud for transaction $t_{c,j}$ has been published.
		\item The update transaction $t_{c,j}$ was generated according to underlying channel protocol.
		\item $j > i$.
		\item  $1\leq id \leq N$.
		\item No other message in the same level following the same path has the same $id$.
		\item The level of the watchtower that propagates the message, determined through the number of signatures, is equal to the number of rounds passed since the settlement transaction was included in the blockchain. 
		\item For every message in the same path as $m$: the $pk$ of the previous message corresponds to the secret key $sk$ signing the current message.
		\end{itemize}
	If any of the points above are violated the network will reject $m$.
	
	  \subsubsection{Payoff for Miners.} 
  The miner including the settlement transaction is paid as in any other transaction. The miner including a valid proof of fraud is paid by the watchtower whose message he included. The reward for the watchtower should thus be high enough to cover the transaction fees.

\section{Incentive \& Security Analysis}
\label{sec:analysis}

In this section we show that watchtowers maximize their profit if they follow the \textsc{DCWC} protocol. Hence, executing the protocol is a dominant strategy and thus the protocol is incentive compatible.

\subsection{Expected Payoff}
In order to show incentive compatibility we need to calculate the expected payoff and analyze changes in the expected payoff when a node decides to deviate from the protocol. 
The expected payoff of watchtower $W$ watching channel $c$ is 

%\begin{equation}
%\label{eq:payoff}
%E[pay_{W,c}] =\sum_{m \in S_W} \frac{P[m]}{d_{m}} \cdot \rho_{c} = %(\sum_{m \in S_W} \frac{P[m]}{d_{m}}) \cdot \rho_{c}
%\end{equation}

\begin{equation}
\label{eq:payoff_new}
\begin{array}{ll}
E[pay_{W,c}] & =\sum_{i = 1}^l P[S_{W,i}] \cdot \frac{\rho_{c}}{i}\\ \\
& =\sum_{i = 1}^l \sum_{m \in S_{W,i}} P[m] \cdot \frac{\rho_{c}}{i}\\ \\
& = \sum_{i = 1}^l |S_{W,i}| \cdot P[m] \cdot \frac{\rho_{c}}{i}
\end{array}
\end{equation}

where $S_{W,i}$ is the set of messages signed by $W$ such that $d_m = i$, including those messages forwarded by $W$. Then $P[S_{W,i}]$ denotes the probability that one of those messages will be included in the blockchain. The equally shared payoff $\rho_{c}$ is split between all watchtowers that forwarded $m$, paid if $m$ gets included on the blockchain. $P[m]$ is the probability that update message $m$ will eventually be included in the blockchain. To calculate this probability for a message sent in the \textsc{Disclose} step of Phase 1, we sum the probabilities of $m$ being chosen for all possible subsets of non-failing watchtowers. For messages generated in during  \textsc{Cascade} to be included in the blockchain, all watchtowers holding messages that travelled fewer hops must have failed. If we assume all nodes follow the protocol and nodes fail independently with probability $\alpha$, then the probability that message $m$ will be included in the blockchain is:

\begin{equation}
\label{eq:probability}
\begin{split}
	P[m] &  = \left\{\begin{array}{lr}
	\sum_{i=1}^{N} \frac{1}{N} (1-\alpha)^{i} (\alpha)^{N-i} \binom{N}{i}, & \text{for } d_{m} = 1\\ \\
	\sum_{i=1}^{|S_i|} (\frac{1}{|S_i|} (1-\alpha)^{i} (\alpha)^{|S_i|-i}\binom{|S_i|}{i}) \cdot \alpha^{\sum_{j = 1}^{i-1} |S_j|},~ & \text{for } 1 < d_{m} \le l\\ \\
	0 , & \text{for } d_{m} > l
	\end{array}\right\}
\end{split}
\end{equation}

%Since $P[m] = P[m_W] + P[S_W \setminus  m_W]$ \todo{need to define these two notations}
%, we need to show that he or she cannot decrease $d_{m}$ nor increase $P[m_W]$ or $P[S_W \setminus m_W]$ by deviating from the protocol. 

Where $|S_i|$ is the total number of valid send messages with depth i. If the protocol is executed faithfully by all parties, then $|S_i| = N^i$.

\begin{lemma}
\label{lem:1_0}
A watchtower cannot increase his expected payoff $E[pay_{W,c}]$ by deviating from the protocol unless he can decrease $d_{m}$ for any of his messages, increase the probability $P[m]$ that one of his messages will be included by decreasing the number $|S_i/S_{W,i}|$ of valid messages not signed by him or increase the number $|S_{W,i}|$ of messages signed by him for any $i$.
\end{lemma}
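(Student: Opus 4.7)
The plan is to read the expected payoff in Equation~\eqref{eq:payoff_new} as a function of a small list of variables that $W$'s behaviour can influence, and then to show that the dependence is monotone in precisely the direction stipulated by the lemma. Since $\rho_c$ is fixed at channel creation and $\alpha$ is an exogenous failure rate, inspecting Equations~\eqref{eq:payoff_new} and~\eqref{eq:probability} shows that $E[pay_{W,c}]$ is determined entirely by the counts $|S_{W,i}|$, the counts $|S_i|$, and the depths $d_m$ carried by $W$'s messages. Writing $|S_i|=|S_{W,i}|+|S_i\setminus S_{W,i}|$, the only free variables are therefore $|S_{W,i}|$, $|S_i\setminus S_{W,i}|$, and $d_m$, so any deviation by $W$ that leaves these three families unchanged cannot change $E[pay_{W,c}]$ at all; this already delivers the structural half of the statement.

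Next I would verify the direction of monotonicity for each family in turn. For $|S_{W,i}|$, Equation~\eqref{eq:payoff_new} contributes one non-negative summand per message signed by $W$, so introducing additional signed messages at depth $i$ simply appends further non-negative terms. For $|S_i\setminus S_{W,i}|$, both the prefactor $1/|S_i|$ and the factor $\alpha^{\sum_{j<d_m}|S_j|}$ appearing in Equation~\eqref{eq:probability} are decreasing in $|S_i|$, so $P[m]$ is decreasing in $|S_i\setminus S_{W,i}|$ when $|S_{W,i}|$ is held fixed; raising $P[m]$ therefore requires shrinking this count, matching the second alternative of the lemma. For $d_m$, lowering the depth both enlarges the per-message share $\rho_c/d_m$ and, through the factor $\alpha^{\sum_{j<d_m}|S_j|}$, enlarges $P[m]$, so smaller depths are strictly preferred. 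These three directional checks correspond one-to-one with the three alternatives of the lemma.

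The main obstacle I anticipate is the coupling between $|S_{W,i}|$ and $P[m]$: adding a message of $W$'s own at depth $i$ also inflates $|S_i|$ and drags $P[m]$ downward for every depth-$i$ and deeper message, so it is not a priori clear that the sum in Equation~\eqref{eq:payoff_new} grows when $|S_{W,i}|$ grows. I would resolve this by stressing that the lemma states only a necessary condition on profitable deviations: it claims that any payoff-increasing behaviour must manifest as at least one of the three listed changes, not that each listed change is individually a strict improvement. Under that reading, the first paragraph already closes the proof, because no deviation that fixes $\{|S_{W,i}|\}$, $\{|S_i\setminus S_{W,i}|\}$, and $\{d_m\}$ simultaneously can shift $E[pay_{W,c}]$; whether a specific instance of the three alternatives actually yields a positive net change is a sufficiency question that belongs to the subsequent incentive-compatibility analysis rather than to this structural lemma.
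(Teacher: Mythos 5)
Your proposal is correct and follows essentially the same route as the paper, whose entire proof is the one-line observation that the lemma ``follows directly from equations \eqref{eq:payoff_new} and \eqref{eq:probability}''; you simply spell out the inspection of those two equations that the paper leaves implicit. Your handling of the coupling between $|S_{W,i}|$ and $P[m]$ by reading the lemma as a necessary condition on profitable deviations is a sensible clarification, consistent with how the paper uses the lemma in Lemmas \ref{lem:1_1}--\ref{lem:1_3}.
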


\begin{proof}
Lemma \ref{lem:1_0} follows directly from equations \ref{eq:payoff_new} and \ref{eq:probability}.
\end{proof}

 Lemmas \ref{lem:1_1}, \ref{lem:1_2}, \ref{lem:1_3} complete our argument by showing that no watchtower can decrease $d_{m}$ or increase $P[m_W]$ or $P[S_W \setminus m_W]$. We prove this for Phase 1 and Phase 2 of the protocol independently.

\subsection{Analysis of Phase 1: Generating Messages}

The first phase of DCWC handles the creation of update messages and how they are forwarded to the watchtowers. We show that during the first phase of DCWC, watchtower $W$ can not influence the execution of the protocol in a way that decreases the number of messages that do not pay him nor in a way that increases the number of valid messages that do pay him.

\begin{restatable}{lemma}{lemmaa}
\label{lem:1_1}
A watchtower $W$ can not decrease $|S_i/S_{W,i}|$ nor increase $|S_{W,i}|$ by deviating from executing Phase 1 of DCWC.
\end{restatable}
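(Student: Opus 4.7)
My plan is to handle the two parts of the lemma separately and to reduce each to the validation rules listed in the transaction validation subsection, together with basic facts about the protocol's use of signatures.

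First, to bound $|S_{W,i}|$ from above, I would observe that any message attributed to $W$ must be signed with $W$'s private key and must extend a valid signature chain that traces back, hop by hop, to the underlying update transaction $t_{c,i}$ signed by a channel party. The rule that each level of a given path uses a distinct $id \in \{1,\dots,N\}$ caps $W$ at $N$ outgoing messages per incoming message, and the rule that the $pk$ of the previous message must correspond to the $sk$ signing the current message prevents $W$ from grafting his signature onto a path through which he did not actually forward. Making this precise by induction on $i$: for $i = 1$ only channel parties can originate messages, so $W$ contributes nothing new; for larger $i$, $|S_{W,i}|$ is at most $N$ times the number of depth-$(i-1)$ messages $W$ has already signed or received, which is exactly what the protocol produces. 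Any additional message $W$ attempts to inject would fail at least one of the miner's checks and thus never enter $S_i$.

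Second, to show $W$ cannot shrink $|S_i \setminus S_{W,i}|$, I would argue that every action $W$ can perform in Phase~1 is either sending a message of his own or refraining from doing so, neither of which lets him remove, alter, or invalidate a message held by another watchtower. Because messages are encrypted by their originator and chained by signatures, $W$ has no way to tamper with traffic he did not produce himself, so the behavior of the other watchtowers (and hence $S_i \setminus S_{W,i}$) is outside his reach. The only mildly subtle case is $W$ declining to cascade some incoming message $m$: this suppresses certain depth-$({>}i)$ descendants of $m$, but every such descendant would carry $W$'s signature and therefore belong to $S_{W,\cdot}$, not to $S_\cdot \setminus S_{W,\cdot}$. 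Hence refusing to cascade can only decrease $|S_{W,j}|$ for $j>i$ while leaving $|S_j \setminus S_{W,j}|$ unchanged, which is precisely a deviation forbidden by the lemma's ``increase'' clause when combined with Lemma~\ref{lem:1_0}.

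The main obstacle will be arguing that the validation checks in the transaction validation subsection are truly exhaustive, in particular that no clever combination of replay, id collision across sibling paths, or signature-chain reuse lets $W$ manufacture an extra valid message or duplicate one of his own. I would therefore spend the bulk of the proof enumerating the possible malformed messages $W$ could construct, pairing each with the specific validation rule that rejects it, so that the bound $|S_{W,i}| \le N^i$ realized by faithful execution is actually tight. Assuming the validation rules as stated, combining the upper bound on $|S_{W,i}|$ with the impossibility of affecting $S_i \setminus S_{W,i}$ yields the lemma.
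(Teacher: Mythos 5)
Your overall decomposition mirrors the paper's: you split the claim into (a) $W$ cannot shrink $|S_i \setminus S_{W,i}|$, arguing $W$ has no influence over messages he did not produce, and (b) $W$ cannot inflate $|S_{W,i}|$, arguing from the per-level bound $N$, the $id$ range check, and the signature-chain check. Part (a) matches the paper's argument essentially verbatim, and your observation that refusing to \textsc{Cascade} only suppresses descendants that would carry $W$'s own signature (hence only lowers $|S_{W,j}|$) is sound and consistent with the definition of $S_{W,i}$.

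The genuine gap is in part (b), precisely at the point you defer as the ``main obstacle.'' You assert that any extra message $W$ injects ``would fail at least one of the miner's checks and thus never enter $S_i$,'' and you plan to discharge this by pairing every malformed message with a rejecting rule. That pairing cannot be completed as stated: by the pigeonhole argument, forging extra messages forces either an $id > N$ (indeed rejected, together with its descendants) or a \emph{duplicate} $id$ sent to two different recipients. A duplicated $id$ is not rejected outright --- the rule ``no other message in the same level following the same path has the same $id$'' only bites if both copies are actually submitted; if one recipient fails, the surviving copy is perfectly valid on its own. The paper's proof confronts exactly this case and concedes that, with independent failure probability $\alpha$, duplication strictly increases the chance that one of $W$'s copies gets through whenever $2\alpha(1-\alpha) > (1-\alpha)^2$, i.e.\ $\alpha > \tfrac{1}{3}$; it then closes the argument only informally, by noting the effect does not scale arbitrarily and that such redundancy in fact helps the security of the protocol. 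So your claimed tight bound $|S_{W,i}| \le N^i$ for \emph{valid} messages is not achievable by validation rules alone, and your enumeration strategy would either miss this deviation or reach a conclusion contradicted by the paper's own analysis. To fix the proof you need to explicitly treat the duplicate-$id$ deviation and argue (as the paper does, or more rigorously) why it does not undermine the incentive-compatibility conclusion.
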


 Furthermore we show that he can not manipulate the number of signatures on any of his messages.

 \begin{restatable}{lemma}{lemmab}
 \label{lem:1_2}
A watchtower $W$ can not Decrease $d_{m}|m \in S_{W,i} ~ \forall i$ by deviating from executing Phase 1 of DCWC.
 \end{restatable}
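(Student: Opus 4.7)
The plan is to exploit the cryptographic structure of messages. By construction in \textsc{Disclose} and \textsc{Cascade}, every valid message is a nested object of the form $\{id, W, \{id', W', \dots \{id^{(k)}, W^{(k)}, t_{c,j}\}_{K^{(k)}} \dots\}_{K'}\}_K$, i.e.\ one layer of signing per hop. The miner's checks from Section~\ref{sec:protocol} enforce (i) that the $pk$ exposed by each layer matches the $sk$ used at the next, so the signature chain has to agree end-to-end, and (ii) that the \emph{level} at which the message is propagated equals the number of signature layers, which is exactly $d_m$. Hence $d_m$ is not a free parameter $W$ may declare: it is fixed by the cryptographic object itself.

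I would then enumerate the only ways $W$ could conceivably lower $d_m$ for an $m\in S_{W,i}$, and rule each out. First, $W$ might try to strip one or more inner layers of a message he received: this breaks check (i) (the remaining outer layer's referenced $pk$ no longer corresponds to the revealed inner signer), and also check (ii) (the number of signatures no longer matches $W$'s actual round/level), so every such doctored message is rejected. Second, $W$ might try to forge a shorter, already-signed prefix in the names of the preceding watchtowers (or of $A/B$) and then sign once on top: this reduces directly to forging signatures under honest keys and is excluded by the standard unforgeability assumption on the signature scheme.

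Third, $W$ could try to be inserted at a shallower position in the dissemination tree so that messages naturally arrive with smaller depth. But the only nodes that decide who occupies the depth-$k$ slots are the depth-$(k{-}1)$ nodes executing \textsc{Cascade} (and, at depth $1$, the endpoints $A/B$ executing \textsc{Disclose}); they select neighbours locally. Nothing in Phase~1 gives $W$ a handle on those choices, so $W$ cannot deterministically move himself to a shallower layer by deviating from the prescribed protocol; at most he can refuse to forward, which only \emph{increases} depths elsewhere and does not shrink $d_m$ for any $m$ already in $S_{W,i}$.

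The main obstacle will be the cryptographic step: making precise that any attempt to alter the nested layer count either (a) desynchronises the $pk$/$sk$ chain or (b) requires a signature forgery. Once that is pinned down, the routing argument for point three is essentially bookkeeping about who invokes \textsc{find new neighbour}, and the lemma follows by combining the two observations over all $i$.
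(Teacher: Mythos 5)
Your core argument is the same as the paper's: the nested, per-hop signing of messages together with the miners' validation checks (the $pk$/$sk$ chain consistency and the requirement that the level equal the number of signatures) fixes $d_m$ as a property of the cryptographic object itself, so lowering it would require either stripping layers (rejected by the checks) or forging upstream signatures (excluded by unforgeability). This is exactly the paper's first case, ``due to the layered construction of update messages,'' just spelled out in more detail, which is fine.

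Where you diverge is the second case the paper treats explicitly: a watchtower $W$ that \emph{already} controls a node closer to the root of the dissemination tree (e.g., one of the $N$ neighbours selected by $A/B$ in \textsc{Disclose} or by an upstream node in \textsc{Cascade}). Your third point only rules out $W$ newly inserting himself into a shallower slot; it does not cover a $W$ who legitimately sits at depth $i' < i$ and tries to mint an \emph{additional} low-depth message $\tilde m$ embedding the update he holds from deeper in the tree. Such a $\tilde m$ is not caught by your cryptographic step, because a message issued from a legitimately shallow position has a perfectly valid signature chain and level count. The paper dispatches this by observing that creating $\tilde m$ amounts to producing more than $N$ messages (or a duplicate $id$) at level $i'$, which the validation rules forbid --- i.e., it reduces to Lemma~\ref{lem:1_1}. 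Adding that reduction (equivalently, noting that the extra message is an increase of $|S_{W,i'}|$, already excluded by Lemma~\ref{lem:1_1}) is needed to make your claimed enumeration of ``the only ways'' to lower $d_m$ exhaustive; as written, this case is a genuine, if small, gap.
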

 
 \iffalse
  \begin{proof}
 In Phase 1, $W$ is only required to store the message $m$ that is last signed by him at this point. The only way to deviate from the protocol would be to forge a message $\tilde{m}$ s.t. $d_m > d_{\tilde{m}}$. We show that $W$ won't be able to construct such a message.
 	 If $W$ doesn't control any nodes in the tree which have a shorter path to the root, then he can not create a valid  $\tilde{m}$, due to the layered construction of update messages.
 	If $W$ does control a watchtower in the tree which is closer to the root, receiving messages after $i$ hops. Then $m$ is in $S_{W,i}$ which means that we can apply lemma \ref{lem:1_1}, as creating $\tilde{m}$ would be equivalent to creating more than $N$ messages at level $i$.
 \end{proof}
 \fi
 
 \subsection{Analysis of Phase 2: Committing Messages}
 The second phase of the protocol handles the spreading of a valid proof-of-fraud through a update message $m$. Watchtowers might want to withhold proof of frauds, publish them early or deviate in other ways to increase their expected payoff. 
 
 Note that there is no need to show that $W$ is not able to decrease $d_m$ in this phase, as we assume that Phase 1 has been completed and all update messages have been created accordingly.
 It is also easy to see that $W$ can't decrease  $|S_i/S_{W,i}|$ as we assume that Phase 1 has already completed and $W$ has no control over the choices that the owners of $|S_i/S_{W,i}|$ take.

  \begin{restatable}{lemma}{lemmac}
  \label{lem:1_3}
  A watchtower $W$ can not increase $|S_{W,i}|$ by deviating from executing Phase 2 of DCWC.
  \end{restatable}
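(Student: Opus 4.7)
The plan is to argue that, once Phase~1 terminates, the multiset of valid messages signed by $W$ is essentially fixed, and no behavior in Phase~2 can enlarge any $S_{W,i}$ without violating one of the miner validation checks listed in the Transaction Validation subsection. I would begin by observing that every message $W$ might try to submit in Phase~2 is either (a) a message already present in $S_{W,i}$ at the end of Phase~1, contributing nothing new, or (b) a freshly fabricated message $m'$ signed by $W$ and claimed to have depth $i$.

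For case (b), I would walk through the validation rules in turn. The signature-chain check forces the predecessor message embedded in $m'$ to carry a public key that matches $W$'s signing key, so $m'$ can only extend a chain in which $W$ sits at depth $i-1$ or below. The uniqueness-of-$id$ at the same path level, combined with the $1\le id\le N$ bound, caps the fan-out that $W$ could produce at any of his ancestor positions. Finally, the ``level equals number of rounds passed'' check prevents $W$ from postponing the generation of Phase~1 messages into Phase~2 and then injecting them at a lower apparent depth, since the signature count visible in the chain is pinned to the round.

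Combining these, I would conclude that fabricating a valid $m'$ not already in $S_{W,i}$ would require either forging a predecessor signature from a non-$W$ watchtower (infeasible by unforgeability of signatures) or producing more than $N$ valid siblings from one of $W$'s own nodes, which is precisely the Phase~1 deviation already ruled out by Lemma~\ref{lem:1_1}. Hence $|S_{W,i}|$ cannot grow in Phase~2.

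The main obstacle I anticipate is handling the case where $W$ controls several adjacent nodes along a single chain in his own subtree. There, $W$ might hope to retroactively ``invent'' intermediate messages between nodes he owns and thereby inflate the count at some depth. I would address this by appealing to the timing check (all levels below the current round are frozen) and to the uniqueness of $id$ within a path level, so that any new intermediate message either duplicates an existing $id$ and is rejected, or raises the sibling count above $N$ at some ancestor and again reduces to the Phase~1 bound of Lemma~\ref{lem:1_1}.
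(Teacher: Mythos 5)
Your proposal is sound for the literal statement, but it proves it by a genuinely different route than the paper. You give a structural argument: with Phase~1 assumed completed (the same assumption the paper states just before the lemma), the set of valid $W$-signed messages is frozen, and the miner validation checks (signature chain, $1\le id\le N$, uniqueness of $id$ per path level, level-equals-round) force any freshly fabricated $m'$ to require either forging a non-$W$ signature or exceeding the fan-out bound $N$ at one of $W$'s own positions, which reduces to the Phase~1 situation already handled by Lemma~\ref{lem:1_1}. The paper instead enumerates the concrete Phase~2 behaviors a watchtower could try: publishing $m\in S_{W,i}$ too early (the message is invalid, rejected, and $W$ risks being disconnected), publishing it in a later round (no advantage), or publishing a stale proof $\tilde{m}$ belonging to an older update than the newest one $W$ holds (dominated, since the newest message serves as a proof-of-fraud against at least as many fraudulent settlement transactions). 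Your version hews more closely to the wording about $|S_{W,i}|$ and makes explicit why fabrication in Phase~2 collapses onto Lemma~\ref{lem:1_1}; the paper's version addresses the deviations that actually matter in Phase~2 --- \emph{when} to publish and \emph{which} stored message to publish --- which affect the inclusion probability rather than the message count and are what Theorem~\ref{Th:1} ultimately needs. If you adopt your argument, you should still add a sentence disposing of the timing and stale-proof deviations (or note that they do not alter $|S_{W,i}|$ and defer them to the payoff discussion), since otherwise the incentive analysis of Phase~2 is incomplete even though the stated count claim is established.
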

  
  \iffalse
  \begin{proof}
  $W$ can take three different actions to try increasing the probability that his message will be included in the next block. We show that none of them in fact increase that probability.
  	Firstly, $W$ could send $m \in S_{W,i}$ too early, e.g. in round $j<i$. In this case the transaction is invalid, the network will reject the message and might disconnect from $W$.
  	Secondly, $W$ could send $m$ in a later round. It is easy to see that there is no advantage in doing so.
  	Lastly, $W$ could send $\tilde{m}$ which belongs to an older update transaction than $m_W$ but a newer one than settlement transaction. If $\tilde{m}$ is in fact included in the blockchain, then the protocol terminates successfully nevertheless. However, there is no incentive for $W$ to store $\tilde{m}$ as storing $m$ instead is a dominant strategy, since the set of potential settlement transactions for which $m$ is a valid proof-of-fraud is a subset of the set of settlement transactions for which $\tilde{m}$ is a valid proof-of-fraud.
  \end{proof} 
  \fi

\begin{theorem}
\label{Th:1}
DCWC is incentive compatible.
\end{theorem}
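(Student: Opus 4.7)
The plan is to assemble Theorem \ref{Th:1} as a direct corollary of Lemmas \ref{lem:1_0}--\ref{lem:1_3}, by exhaustively enumerating the only three channels through which a watchtower could inflate $E[pay_{W,c}]$ and closing each one with the appropriate lemma. First I would recall from Lemma \ref{lem:1_0} that, since $E[pay_{W,c}]$ depends on the protocol run only via the quantities $d_m$ for $m \in S_{W,i}$, the set sizes $|S_{W,i}|$, and $P[m]$ (which in turn depends monotonically on $|S_i \setminus S_{W,i}|$ through Equation \ref{eq:probability}), any profitable deviation must either (i) decrease $d_m$ for some $m \in S_{W,i}$, (ii) decrease $|S_i \setminus S_{W,i}|$, or (iii) increase $|S_{W,i}|$.

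Next I would split the analysis along the two phases. For Phase 1, Lemma \ref{lem:1_1} directly rules out (ii) and (iii), and Lemma \ref{lem:1_2} rules out (i). Hence no unilateral deviation by $W$ during \textsc{Disclose \& Cascade} can yield a strict increase in expected payoff; any strategy reducible to a faithful execution of Phase 1 is weakly dominant in this phase. For Phase 2, I would observe that by the time \textsc{Watch \& Commit} begins, the values $d_m$ are already fixed by the messages produced in Phase 1 (so (i) is vacuous), and $|S_i \setminus S_{W,i}|$ depends only on the other watchtowers' messages and the random choice of the next block's miner, over neither of which $W$ has any control (so (ii) is vacuous). The only remaining lever is (iii), which Lemma \ref{lem:1_3} forecloses by examining the three possible deviations (premature send, late send, send of an outdated $\tilde{m}$) and showing each is weakly dominated.

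Combining the two phases, any deviation decomposes into a Phase 1 part and a Phase 2 part, and neither part can strictly increase expected payoff; the faithful strategy is therefore a dominant strategy for every watchtower, which is the definition of incentive compatibility. The main conceptual subtlety, I expect, lies in justifying that one may analyze the two phases independently: a deviation in Phase 1 could in principle enlarge the strategy set available to $W$ in Phase 2 (for example by creating multiple sybil-controlled nodes at different depths). I would address this by noting that Lemma \ref{lem:1_1} already treats sybil behavior within Phase 1 under the worst-case assumption that all such nodes are controlled by $W$, so the upper bound on $|S_{W,i}|$ carries over unchanged to Phase 2, and Lemma \ref{lem:1_3} applies node-by-node. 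Everything else is bookkeeping from Equations \ref{eq:payoff_new} and \ref{eq:probability}.
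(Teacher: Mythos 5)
Your proposal is correct and follows essentially the same route as the paper: Theorem \ref{Th:1} is obtained directly by combining Lemma \ref{lem:1_0}'s enumeration of the only profitable deviation channels with Lemmas \ref{lem:1_1}, \ref{lem:1_2}, and \ref{lem:1_3}, which close them phase by phase. Your added remarks on phase independence and sybil behavior are elaborations of points the paper already makes informally around Lemmas \ref{lem:1_1} and \ref{lem:1_3}, not a different argument.
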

\begin{proof}
Theorem \ref{Th:1} follows directly from Lemmas \ref{lem:1_0}, \ref{lem:1_1}, \ref{lem:1_2}, \ref{lem:1_3}.
\end{proof}

  \subsubsection{Mining Nodes}
Mining nodes might also deviate from the protocol to improve their payoffs. Their set of actions is a bit more limited as any block containing invalid update messages would be rejected by other nodes.
	The analysis thus far has assumed that the set of watchtowers is disjoint from the set of mining nodes. A node $W$ which is mining a new block and is holding a valid update message, can give himself an advantage of receiving the payoff by not randomly including a proof-of-fraud but rather including his own. This behavior does not influence the outcome of the protocol as it leads to a proof-of-fraud being published either way. The mining node can only include a valid proof-of-fraud in his block without invalidating the block. Thus mining nodes have an additional incentive to participate as watchtowers in our protocol.

\section{Adaption for Compatibility with Lightning: The DCWC* Protocol}
\label{sec:bitcoin}

So far, we described the protocol to incentivize watchtowers to watch the channels in high level. An implementation of the protocol depends on the underlying channel protocol and the underlying blockchain protocol. An implementation in Turing complete blockchains can be done as is. An implementation for blockchain protocols that support more limited scripting languages, such as Bitcoins' Script, would require changes to the protocol. Most prominently limiting the number of forwarded messages at each step is, to the best of our knowledge, not possible to be implemented for Lightning due to the limitations of Script. Furthermore, determining how to pay the watchtowers without explicitly naming them in the funding or update messages is not trivial. The protocol depends not only on the scripting language, but also on the channel implementation. We propose a simplified version of our protocol that leaves unspent transaction outputs to be claimed by watchtowers, designed for the current Lightning implementation \cite{poon2015lightning}. DCWC* can be implemented without requiring changes to the channel protocol.

Just as in DCWC, DCWC* creates layers of watchtowers. The $i$-th layer of watchtowers is allowed to issue a proof of fraud after $i$ rounds trail the occurrence of a fraudulent settlement transaction, also similar to the DCWC protocol. The protocol terminates after a predetermined number of blocks trail the settlement transaction, making funds and rewards spendable. 
%Both these functionalities can be implemented in Bitcoin using \texttt{nSequence}.
Rewards are granted for the watchtower that submitted the published proof of fraud to the miners and the watchtowers that forwarded that proof of fraud.

The first phase, \textsc{disclose \& cascade*}, is executed once for every off-chain update of the channel. This phase is responsible for spreading the update in the network of watchtowers.

\subsection{Phase 1: Disclose \& Cascade*}
This phase initiates after the channel is opened and ends when a settlement transaction has been published in the blockchain. \textsc{Spread} ensures that a certain amount of watchtowers receive update messages, and that the messages are constructed such that watchtowers receive a payoff for storing and forwarding the messages.

Whenever $B$ receives an update transaction from $A$, he also receives a transaction that invalidates the previous transaction, rewarding $B$ with all of the channels funds, if the previous state is published on-chain by $A$. We refer to this transaction, invalidating update transaction $t_{c,i}$ as \textit{invalidation transaction} $\tilde{t}_{c,i}$, which corresponds to a proof-of-fraud in DCWC. $B$ wants to be certain that some watchtowers are aware of this invalidation transaction. Thus, $B$ sends$\tilde{t}_{c,i}$ to watchtowers, claiming most of the funds for himself but leaving some unspent. When a watchtower $W$ receives such a message, he claims some of the unspent transaction outputs (UTXOs) for himself but leaves some unspent and forwards the message. $W$ timelocks the transaction which he added to the message. This way, he will always get a chance to publish a proof of fraud before whomever he sends it to and thus has no incentive not to forward the update. Note that if $W$ later gets the chance to publish a proof of fraud he can still claim all UTXOs for himself. Only if the recipient of his message publishes the proof, $W$ receives the part he claimed before forwarding the message. If the unclaimed UTXO is too small, the receiver will possible discard the message; thus a market of self regulation evolves to determine the appropriate amount for the watchtowers' fee. 

The construction of these update messages is depicted in Figure \ref{fig:construction}. If $A$ and $B$ agree on a new update $t_{c,i}$ of the channel they provide each other with the invalidation transaction $\tilde{t}_{c,i-1}$ of the previous transaction. Then each party can create one transaction, giving himself all of the funds in the channel and another transaction with a transaction-level relative timelock (\textit{nSequence}) to all his neighbours, leaving them in control of parts of the output. This can be repeated recursively by his neighbours until the sum of relative timelocks exceeds the timelock of the channel, or until it is reasonable to assume that adding another level makes the participation of more watchtowers unprofitable. 

	\begin{figure}[h]
		\centering
		\includegraphics[width=0.8\textwidth]{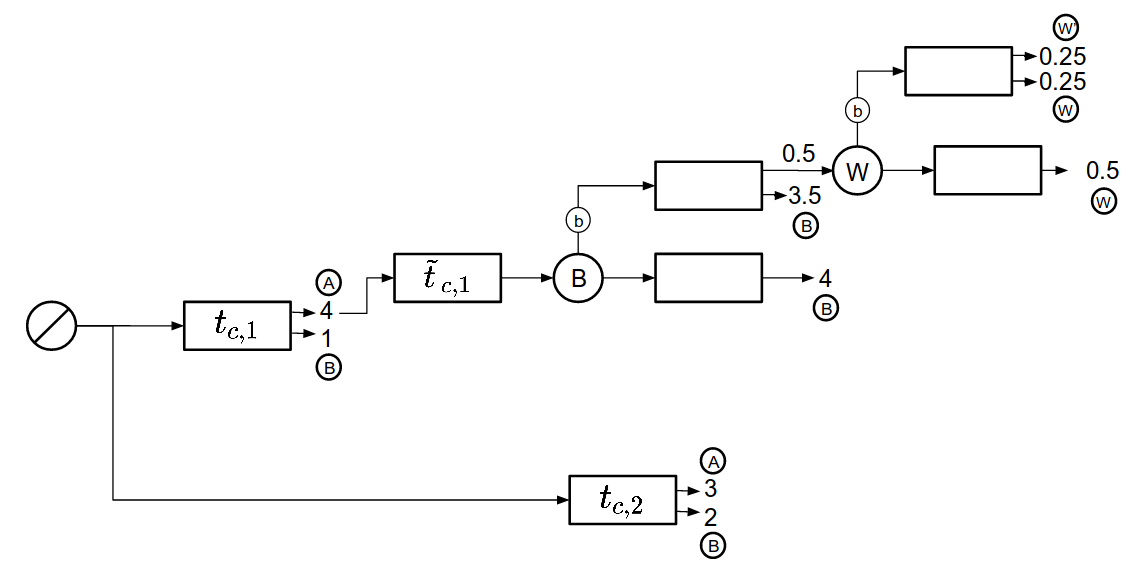}
		\caption{Participant $B$ in the channel \textsc{disclosing} the newest invalidation message $\tilde{t}_{c,1}$ to Watchtowers $W$ after $A$ and $B$ agreed on a new update $t_{c,2}$. $W$ \textsc{cascades} the message to $W'$ with a transaction level relative timelock $b$.} 
	    \label{fig:construction}
	\end{figure}
    
\subsection{Phase 2: Watch \& Commit*}
	The second phase of the protocol ensures that the fraud will be detected and proven, if it occurs.
	%without flooding the mining nodes with too many redundant messages.
	This phase initiates after the a settlement transaction $t_{c,i}$ is published on the blockchain by one of the involved parties. Since all $t_{c,i}$ are timelocked, the \textsc{commit} phase consists of a fixed number of rounds. Any watchtower holding $\tilde{t}_{c,i}$ can now send it to the blockchain network. Then they can, round after round, append the transactions leading to their payoff.

%The implementation of the \textsc{DCWC*} protocol for Bitcoin as well as a discussion on the adaptation of the protocol can be found in the Appendix. 	
\iftrue
\subsection{Implementation}
Our protocol has only one type of no-trivial transaction. Namely, the transaction that is forwarded to a watchtower (depicted as a circle in fig. \ref{fig:construction}).It consists of two outputs, one going to the sender of the message $B$ and one that is locked by the script shown in figure \ref{fig:op_code}. The forwarded transaction has the purpose of paying the sender of the transaction and the watchtower receiving it, if the sender is offline when the message is included on-chain. The output script is constructed as follows:

\begin{figure}
\hrulefill \\
OP\_IF\\
\hspace*{0.6 cm} $b$ OP\_CHECKSEQUENCEVERIFY \\
\hspace*{0.6 cm} \textless W's public key\textgreater \\
\hspace*{0.6 cm} OP\_CHECKSIG \\
OP\_ELSE \\
\hspace*{0.6 cm} \textless B's public key\textgreater \\
\hspace*{0.6 cm} OP\_CHECKSIG \\
%\hspace{12pt}
\hrule
    \caption{Caption}
    \label{fig:op_code}
\end{figure}

 Recall that the funding transaction, $t_{c,i}$ and $\tilde{t}_{c,i}$ are part of the channel protocol and not further discussed here.

\subsection{Discussion of DCWC*}

	\begin{itemize}
		\item \textit{Abolishing rounds.} In a practical setting it would make sense to abolish rounds, and with that the \textsc{cascade} process all together. This would lead to a higher workload for the participants of the channel as they have to create more messages, but reduces the number of transactions that have to be included on-chain.
        \item \textit{eltoo.}It remains to be seen how a version of DCWC* would work with eltoo \cite{deckereltoo} channels. While the DCWC protocol requires watchtowers to store only $O(1)$ messages, DCWC* requires watchtowers to store $O(n)$ messages, where $n$ is the number of channel updates. Possibly the proposed SIGHASH\_NOINPUT could be used to improve the implementation of DCWC* to require storing only $O(1)$ messages.
        \item \textit{privacy.} In DCWC* watchtowers receive a transaction $tx$, which does not leak more information than the id of the input transaction of $tx$. Thus, the watchtowers do not learn the distribution of funds until the settlement transaction is included on-chain.
	\end{itemize}
\fi

\section{Generalizing Payment Channels}\label{sec:newchannels}

In this section we study {\it extended domain channels}, or {\it xD-channels} for short, a protocol permitting new types of payment channels. The xD-channels can be custom-designed, depending on intended use, to optimize their functionality, such as the amount of locked funds or the number of signatures required to authorize a transaction.

%\subsection{Intuition}

The premise of traditional two-party channels is that payments can be executed off-chain in a restricted form to avoid double-spending; funds are locked to the channel in the funding transaction, and thus channel participants can only pay each other. We recognize that bi-directional, two-party functionality is rarely optimal, and relax the protocol by allowing each party to freely choose whom to pay within the channel. Note that double spending is not possible as long as each channel participant specifies upfront only one party eligible to be paid by the participant. Therefore, guarantees similar to those of traditional two-party channels are preserved.

\subsection{xD-channels}

\begin{definition}[xD-channel graph]
An xD-channel is described by a directed graph $G = (V, E)$, where $V$ is a user (a set of public keys), and $E$ is a directed edge such that each vertex has at most one outgoing edge.
\end{definition}

\begin{definition}[xD-channel state]
The xD-channel state is a tuple $(G, {\cal I}, k, {\cal S}, {\cal P})$, where:
\begin{itemize}
\item $G$ is the xD-channel graph.
\item ${\cal I} = \{(v_0, f_0),\dots,(v_n, f_n)\}$ is the initial assignment of funds $f_i$ to public keys $v_i$ in each constituent xD-channel.
\item $k$ is a natural number.
\item ${\cal S}$ are signatures of all public keys specified in ${\cal I}$, signing $(G, {\cal I}, k)$.
\item ${\cal P}$ is a set that for every vertex $v$ of $G$ contains the amount of funds $f$ that $v$ has paid along its' outgoing edge, together with a signature $s$ authorizing that payment, $(v,f,s)$.
\end{itemize}
\end{definition}

\noindent
An xD-channel is established by a funding transaction of the amount $\sum_{(v_i,f_i) \in {\cal I}} f_i$ to the channel $(\emptyset, {\cal I}, 0, {\cal S}, \emptyset)$.

Similarly to traditional payment channels, an xD-channel is closed by publishing its state. Given some xD-channel states published during the settlement period, only states with the maximal sequence number $k$ have effect. For $i=1,\dots,m$, let $(G, {\cal I}, k, {\cal S}, {\cal P}_i)$ be such states. Then, the effect of publishing those states is equivalent to publishing only $(G, {\cal I}, k, {\cal S}, {\cal P'})$, where ${\cal P'}$ contains for each vertex $v$ the maximal amount $v$ has signed as paid: $(v,f,s) : f = \max_{i=1,\dots,m} \{f' : (v,f',s') \in {\cal P}_i\}$. %Yea this is unnecessary spaghetti. Feel free to fix.

Suppose only one channel state $((V,E), {\cal I}, k, {\cal S}, {\cal P})$ is published during the settlement period. Then, for each vertex $v$ in $G$, the number $f_v$ of funds paid along $v$'s outgoing edge in the resulting state is determined by the solution to the following linear program:
\begin{align*}
\text{Maximize } \sum_{v \in V} f_v \text{ subject to: } & \forall_{(v,f,s) \in {\cal P}} f_v \le f \\
& \forall_{(v, i_0) \in {\cal I}} \sum_{(u,v) \in E} f_u + i_0 \ge f_v
\end{align*}

Note that the linear program determines the highest amount $f_v$ that $v$ has agreed to pay that does not result with a negative balance for $v$.

Participants ensure security of transactions in a similar, but generalized way to traditional two-party channels. Whenever a participant $u$ wants to transact funds to another participant along an edge $(u,v)$, they send $v$ an updated element $(u, f, s)$ of ${\cal P}$, where $f$ is increased by the funds paid, together with an authorizing signature $s$. $u$ proves to $v$ that the current state of the channel assigns enough funds to them to execute the transaction, by presenting $v$ with commitments by other participants to pay $u$: $\{(w, f_w, s_w) : (w, u) \in E\}$, such that $i_0 + \sum_{\{(w, f_w, s_w) : (w, u) \in E\}} f_w \ge f$ (where $(u, i_0) \in {\cal I}$), along with commitments funding those payments and so on. In turn, $v$ can pass over these elements of ${\cal P}$ to prove its' ability to pay, and so on. Crucially, because each vertex has at most one outgoing edge, there can be no attempt to double spend within the channel. Moreover, changes that are not merely commitments to pay more along participant's outgoing edge, such as changing the channel's topology $G$, require the signature of everybody involved.

Note that when the channel participants decide to close the channel, they might issue a final channel state $(\emptyset, {\cal I}, k_{\max}, {\cal S}, \emptyset)$ where the resulting fund division is concisely described by ${\cal I}$, to minimize the blockchain space required to publish it.

\begin{corollary}
A traditional, bi-directional two-party channel is equivalent to an xD-channel with $G = (\{A,B\},\{(A,B),(B,A)\})$.
\end{corollary}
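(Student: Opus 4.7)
The plan is to establish the corollary by exhibiting a state-by-state correspondence between the two protocols and verifying that the linear program defining the xD-channel resolution reduces exactly to the balance equations of a standard bidirectional two-party channel.

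First I would specialize the xD-channel definition to $V=\{A,B\}$, $E=\{(A,B),(B,A)\}$. Each vertex has out-degree one, so the graph satisfies the xD-channel constraint. Writing $(A,i_A),(B,i_B) \in {\cal I}$ for initial funds and $(A,p_A,s_A),(B,p_B,s_B) \in {\cal P}$ for the cumulative payment commitments, the linear program of Section~\ref{sec:newchannels} becomes: maximize $f_A+f_B$ subject to $f_A \le p_A$, $f_B \le p_B$, $i_A + f_B \ge f_A$, and $i_B + f_A \ge f_B$. Whenever the committed payments are jointly consistent with non-negative balances, the unique optimum is $f_A = p_A$ and $f_B = p_B$, yielding final balances $b_A = i_A + p_B - p_A$ and $b_B = i_B + p_A - p_B$. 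This is exactly the accounting identity of a bidirectional two-party channel in which $p_A,p_B$ record the total funds paid by $A$ and $B$ respectively, so every xD-channel state induces a well-defined traditional channel state.

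For the converse, given any reachable traditional two-party state with balances $(b_A,b_B)$ satisfying $b_A+b_B = i_A+i_B$, I would pick $(p_A,p_B) \ge 0$ with $p_A - p_B = i_A - b_A$ (for instance $p_A = \max(0,i_A-b_A)$ and $p_B = \max(0,b_A-i_A)$) and verify that the corresponding xD-channel state $(G,{\cal I},k,{\cal S},{\cal P})$ recovers exactly those balances under the LP. A traditional update transferring funds in one direction is then realized by the paying party signing a strictly larger cumulative value on its own outgoing edge, matching the xD-channel authorization rule.

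The main obstacle I anticipate is reconciling signature semantics: many formulations of bidirectional two-party channels (e.g., Lightning's revocable commitments) require both parties to countersign every state, whereas the xD-channel only requires each participant to authorize its own outgoing payments. I would resolve this by observing that signing a larger outgoing commitment only benefits the counterparty, so unilateral signing is at least as safe as mutual signing; conversely, any mutually signed bidirectional state is recoverable in the xD-channel by both parties independently signing the induced $p_A,p_B$. Combined with the two directions above, this yields the claimed equivalence.
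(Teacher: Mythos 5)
Your argument is correct, and it is worth noting that the paper itself supplies no proof of this corollary at all: it is stated as an immediate consequence of the definitions in Section~\ref{sec:newchannels}. What you have done is make that implicit specialization explicit, and your details check out. Setting $V=\{A,B\}$, $E=\{(A,B),(B,A)\}$ indeed satisfies the out-degree-one condition, the linear program collapses to $f_A\le p_A$, $f_B\le p_B$, $i_A+f_B\ge f_A$, $i_B+f_A\ge f_B$, and when the commitments are feasible the unique optimum $f_A=p_A$, $f_B=p_B$ gives exactly the net-balance accounting $b_A=i_A+p_B-p_A$, $b_B=i_B+p_A-p_B$ of a bidirectional channel; your converse construction $p_A=\max(0,i_A-b_A)$, $p_B=\max(0,b_A-i_A)$ recovers any reachable traditional state. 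The only soft spot is one you already flag: the paper never formally defines what ``equivalent'' means, and the two constructions differ at the mechanism level (a traditional channel uses mutually countersigned, revocable states, while the xD-channel uses monotone, unilaterally signed cumulative commitments resolved by taking per-vertex maxima at settlement). Your reading of the claim as functional equivalence --- same reachable balance distributions, same bidirectional update capability, same settlement outcome --- is the natural one and is consistent with the paper's intent, but it is an interpretive choice rather than something the paper pins down; with that caveat stated, your proof is a sound and more informative substitute for the paper's bare assertion.
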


\begin{example}
Consider a scenario in which many parties $c_0,\dots,c_m$, the {\it clients}, are interested in periodically paying a party $s$, the {\it supermarket}. The supermarket expects to never need to transfer funds to the clients, but there is another party $t$, the {\it tax office}, to which the supermarket wants to periodically make payments. The tax office expects to pay $c_m$ often.

One solution involving traditional two-party channels might be to establish a channel between each client and the supermarket and channels between the supermarket, tax office and the client. Note, that funds in each channel are locked to either party in the channel, e.g. the funds paid by the clients to the supermarket cannot be passed on to the tax office. Note, that each client-supermarket channel allows payments from the supermarket to the client, but this functionality is superfluous.

Consider a single xD-channel, where $G$ is illustrated in Figure \ref{figex1}. Note that clients can make payments to the supermarket identically as before. However, the supermarket can pass the funds on to the tax office providing the proofs of clients' payments along with its signature.
\end{example}

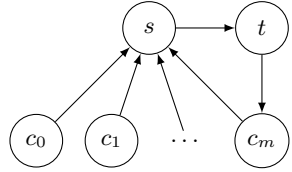
\begin{figure}
\centering
\begin{tikzpicture}

\node(1)[draw,circle,minimum size = 0.7cm] at (0,0) {$c_0$};
\node(2)[draw,circle,minimum size = 0.7cm] at (1,0) {$c_1$};
\node(3) at (2,0) {$\dots$};
\node(4)[draw,circle,minimum size = 0.7cm] at (3,0) {$c_m$};

\node(5)[draw,circle,minimum size = 0.7cm] at (1.5,1.5) {$s$};

\node(6)[draw,circle,minimum size = 0.7cm] at (3,1.5) {$t$};

%\node(9)[draw,ellipse,minimum height = 1.3cm, minimum width = 2.2cm, blue] at (5,0) {};

\draw [-latex] (1)  edge (5);
\draw [-latex] (2)  edge (5);
\draw [-latex] (3)  edge (5);
\draw [-latex] (4)  edge (5);
\draw [-latex] (5)  edge (6);
\draw [-latex] (6)  edge (4);

\end{tikzpicture}

\caption{xD-channel from Example 1. Vertices represent actors, and edges represent the ability to pay.}
\label{figex1}
\end{figure}

\subsection{Discussion}

%Please verify this is not a bag of shit. There is a high risk of this being a bag of shit. I know very little channels literature.

Similarly to traditional two-party channels, xD-channels allow payments to be executed off-chain while preserving the security guarantees of the blockchain, i.e. the parties cannot double spend. To that end, multi-party channels typically require the signature of every party involved. Towards this direction, we can extend the functionality of xD-channels even further.
%The premise of xD-channels, as well as traditional two-party channels, is that payments can be carried out outside of blockchain in a restricted form, such that it is impossible to cheat by double-spending. To achieve this premise, multi-party channels typically require the signature of every party involved. A step in a similar direction can be taken to extend the functionality of xD-channels further. 
So far, we require each xD-channel participant $i$ to choose only one other participant $j$ as the recipient of payments, so that the set of parties ${\cal S}$ reachable from $i$ in the xD-channel graph is protected against potential double spending from $i$. We note that $i$ can be allowed to make a payment to a party $k \neq j$ as long as each party in ${\cal S}$ confirm, by providing a signature, they are not being cheated by such payment. In the presence of channel topology, this approach can be more efficient, i.e. provide a better ratio of functionality to the number of required signatures, than typical multi-party channel approaches.

%With xD-channels being multi-party, another promising direction for improvement is the functionality with respect to leaving or joining the channel. Various protocols ensuring graceful operation over time as well as protecting against misbehaving actors can be devised, all of which we are leaving to future work.

\section{Related Work}\label{sec:relatedwork}

Off-chain payment channels have been extensively studied by the research community as they are the most prominent solution to the blockchain's scalability problem. Multiple versions of channels can be found in literature . Duplex micropayment channels \cite{DW2015channels} use timelocks, while lightning channels \cite{poon2015lightning} depend on punishing the party that misbehaves. 

Payment networks can be build using any version of payment channels. Most payment networks use Hashed Timelock Contracts (HTLC) to execute transactions over multiple hops, such as the Lightning network \cite{poon2015lightning} which relies on Bitcoin \cite{nakamoto2008bitcoin}, and the Raiden network \cite{raiden2017}  which relies on Ethereum \cite{ethereum}. 
However, there are different approaches on constructing payment channels and building payment networks. In a recent work, Miller et al.~present Sprites \cite{Miller2017sprites} to reduce the time for which the funds are locked in a multi-hop transaction. Sprites also supports partial withdraws and deposits without interrupting the channels' functionality. 
On the other hand, Perun \cite{dziembowski2017perun} introduces ``virtual payment channels'', which are payment channels on top of the payment channels, to abolish intermediaries in multi-hop transactions in the payment network.
Our work is complementary to all these channel construction and payment networks, since we address a problem they all share:  all parties involved in channels must be constantly online to ensure security.

A concurrent work, addresses the same problem. McCorry et al.~propose Pisa \cite{mccorry2018pisa}, a protocol that introduces third parties, called custodians, to watch the channels. Their approach focuses on state channels; they set up service contracts between channel parties and custodians. They also address the incentives for participation and propose custodians depositing security funds when setting up the service contract. Although our protocol does not examine this issue, it is very simple and easily implementable, even on Bictoin, in contrast with the Pisa protocol.

% The closest work related to ours is the \textsc{Pisa} protocol \cite{mccorry2018pisa}. Similarly to our paper, \cite{mccorry2018pisa} introduces and pays third parties to watch channels. There are several differences between their and our approach, making each protocol suitable for different use cases. In \cite{mccorry2018pisa} a payment channel with each watchtower is set up and each watchtower is obligated to deposit security funds, while our approach pays watchtowers only in case of success and without requiring security deposits, allowing anyone running a full node to become a watchtower without liquidity requirements. Thus our approach compares more to a open market, in which a single honest or rational party is assumed to exist in a large group of witnesses, while \cite{mccorry2018pisa} introduces a service contract between two (or possibly more) parties. Furthermore we provide the protocol DCWC* which is implementable on top of the lightning network protocol \cite{poon2015lightning}, whereas \cite{mccorry2018pisa} does not seem to be compatible with lightning \cite{poon2015lightning}.

\section{Conclusion \& Future Work}\label{sec:conclusion}
In this work, we presented a mechanism to secure channels by incentivizing third parties, called watchtowers, to actively monitor channels and report fraud to the blockchain. The mechanism is incentive compatible, i.e. following the protocol is a dominant strategy for every watchtower, and allows the parties involved in the channel to go offline. The proposed protocol is lightweight in communication and watchtowers do not learn the distribution of funds in the channel. In addition, we suggested an adaptation of the protocol implementable on the Bitcoin's Lightning network.

Furthermore, we explored channels efficiency. We generalized the construction of channels to allow specific topological structures of the blockchain transaction graph to influence the multi-channel construction. This way, we improve the efficiency  of channels in some specific cases, which are often met in practice in monetary systems. In particular, when each party has only one outgoing edge, each transaction in the multi-party channel requires only one signature. Moreover, the proposed construction enables transfering the money from one channel to the next, similarly to how IOUs work, thus reducing the amount of locked funds in the channel.

For future work, an interesting open topic is to study participation incentives for the watchtowers. Although our protocol is incentive compatible in the sense that if watchtowers decide to participate they cannot gain more by deviating from the protocol, we do not guarantee that they actually profit by participating in the protocol, particularly when no fraud occurs. 
Another line of future work would be to examine the new channels construction. The first open question would be whether parties can be added or removed efficiently without going to the blockchain. Another direction would be to study different topologies for which the construction greatly improves channels' efficiency, either regarding the amount of locked funds or the number of required signatures.

% In this work we presented a novel solution to secure channels through incentivizing watchtowers. For payment channels of smaller volume the protocol lets users participate in a channel network without running a full node, thus making payment channels accessible to everybody. The protocol is incentive compatible and has negligible cost in communication and storage for participants.

% We suggest building a trust network of nodes that watch each others contract on top of our protocol. Note that it is not specified on which network topology the protocol runs on or how neighbors are found. A simple and reasonable choice would be to run the protocol on top of the P2P network used by the blockchain protocol. 

% For future work, the protocol could be implemented for the lightning network. For the Raiden network or other turing-complete protocols DCWC could be implemented.

% This displays the bibliography for all cited external documents. All references have to be defined in the file references.bib and can then be cited from within this document.

\newpage
\bibliographystyle{splncs04}
\bibliography{references}
%
% ---- Bibliography ----
%
% BibTeX users should specify bibliography style 'splncs04'.
% References will then be sorted and formatted in the correct style.
%
% \bibliographystyle{splncs04}
% \bibliography{mybibliography}
%
%\begin{subappendices}
%\renewcommand{\thesection}{\Alph{section}}
\newpage
\appendix
%----------------LEMMA 1----------------------------
\section{Proof of Lemma \ref{lem:1_1}}

\lemmaa*

\begin{proof}
Picture the \textsc{Disclose \& Cascade} algorithm as a tree. 
$W$ can increase $E[pay_{W,c}]$ by increasing the number of valid messages with his signature. This is true because, whenever a message $m$ from a node in his subtree is published $pay_{W,c} > 0$ and in all other cases $pay_{W,c} = 0$. Note that maximizing the number of  messages  $|S_{W,i}|$ in any layer bears no effect on the probability of messages of earlier levels to be included; thus $W$ doesn't decrease his chances of a larger payoff by maximizing the number of valid messages in any level of his subtree. Therefor we can analyze each level individually. \newline \newline

%	\begin{figure}[h]
%	\centering
%	\includegraphics[width=0.4\textwidth]{}
%	\caption{Vizualization of \textsc{Disclose \& Cascade} Algorithm as a tree, where $W$ is the watchtower in the dotted box.}
%	\label{fig:mesh1}
%\end{figure}

It still remains to be shown that $W$ can neither  decrease $|S_i/S_{W,i}|$ nor increase $|S_{W,i}|$. Let $M$ denote the set of all update messages. The protocol could only be exploited in one of two ways: \\

 \textbf{1. Decrease $|S_i/S_{W,i}|$:} Decreasing the number of update messages that are not in his subtree would increase  the chance that one of $W$s messages gets included.
 
 	$W$ has no knowledge of the identities of watchtowers that are not in his subtree. Additionally their success does not depend on any action of $W$. Thus, decreasing $|S_i/S_{W,i}|$ is not feasible for $W$.
 
 \textbf{2. Increase $|S_{W,i}|$:} We show that $W$ can't increase the number of nodes in his subtree. 

 For each depth the number of messages that can be send is restricted by $N$. If $W$ tries to forge more messages, then according to the pigeonhole principle either two $ids$ must be the same, or at least one $id$ must be larger than $N$. 
  In the second case, all descendant messages will be rejected. 
 The first case is a bit trickier. Sending two different messages with the same sequence number to disjoint set of miners is clearly not more beneficial than just sending one message to all. However if the failure probability of nodes is high, creating two messages with the same $id$ increases the chances of one of them to get through. This is true when:
 \begin{equation}
	\begin{array}{lll}
		2\cdot \alpha\cdot(1-\alpha) & > & (1-\alpha)^2 \\
		\implies \alpha & > & \frac{1}{3}
		
	\end{array}
 \end{equation}  \newline
 However this behavior doesn't scale arbitrarily and whenever $\alpha >  \frac{1}{3}$ this behavior increases the security of the protocol. \newline 
\end{proof}

\newpage
%--------------LEMMA 2------------------------
\section{Proof of Lemma \ref{lem:1_2}}

\lemmab*

 \begin{proof}
 In Phase 1, $W$ is only required to store the message $m$ that is last signed by him at this point. The only way to deviate from the protocol would be to forge a message $\tilde{m}$ s.t. $d_m > d_{\tilde{m}}$. We show that $W$ won't be able to construct such a message.
 	 If $W$ doesn't control any nodes in the tree which have a shorter path to the root, then he can not create a valid  $\tilde{m}$, due to the layered construction of update messages.
 	If $W$ does control a watchtower in the tree which is closer to the root, receiving messages after $i$ hops. Then $m$ is in $S_{W,i}$ which means that we can apply lemma \ref{lem:1_1}, as creating $\tilde{m}$ would be equivalent to creating more than $N$ messages at level $i$.
 \end{proof}

%--------------LEMMA 3---------------------
\section{Proof of Lemma \ref{lem:1_3}}

\lemmac*
 \begin{proof}
  $W$ can take three different actions to try increasing the probability that his message will be included in the next block. We show that none of them in fact increase that probability.
  	Firstly, $W$ could send $m \in S_{W,i}$ too early, e.g. in round $j<i$. In this case the transaction is invalid, the network will reject the message and might disconnect from $W$.
  	Secondly, $W$ could send $m$ in a later round. It is easy to see that there is no advantage in doing so.
  	Lastly, $W$ could send $\tilde{m}$ which belongs to an older update transaction than $m_W$ but a newer one than settlement transaction. If $\tilde{m}$ is in fact included in the blockchain, then the protocol terminates successfully nevertheless. However, there is no incentive for $W$ to store $\tilde{m}$ as storing $m$ instead is a dominant strategy, since the set of potential settlement transactions for which $m$ is a valid proof-of-fraud is a subset of the set of settlement transactions for which $\tilde{m}$ is a valid proof-of-fraud.
  \end{proof}

%\end{subappendices}

\end{document}